 \definecolor{darkblue}{RGB}{0,0,150}
  \newtheorem{theorem}{Theorem}[section]
  \newtheorem{proposition}[theorem]{Proposition}
\newcommand{\Mov}[1]{{\color{black}{#1}}}
\newcommand{\Amv}[1]{{\color{black}{#1}}}
\newcommand{\ud}{\ensuremath{\mathrm{d}}}
\newcommand{\Lie}{\ensuremath{\mathcal{L}}}
\DeclareMathAlphabet{\mathpzc}{T1}{pzc}{m}{it}
\def\coxa{{\Huge
C\kern-.1667em\lower.5ex\hbox{O}\-X\kern-.1667em\lower.5ex\hbox{A}\@}%
\index{CoXa}
}
\begin{document}
\begin{abstract}
The TOV equation appears as the relativistic counterpart of the classical condition for hydrostatic equilibrium. In the present work we aim at showing that a generalised TOV equation also characterises the equilibrium of models endowed with other symmetries besides spherical. We apply the dual null formalism to 
spacetimes with two dimensional spherical, planar and hyperbolic symmetries with a perfect fluid as the source. We also assume a Killing vector field orthogonal to the surfaces of symmetry, which gives us static solutions, in the timelike Killing field case, 
and homogeneous dynamical solutions in the case the Killing field is spacelike. 
In order to treat equally all the aforementioned cases, we discuss the definition of a quasi-local energy for the spacetimes with planar and hyperbolic foliations, since the Hawking-Hayward definition only applies to compact foliations. After this procedure, we are able to translate our geometrical formalism to the fluid dynamics language in a unified way, to find the generalized TOV equation, for the three cases when the solution is static, and to obtain the evolution equation, for the homogeneous spacetime cases.
Remarkably, we show that the static solutions which are not spherically symmetric  violate the weak energy condition (WEC). \Amv{We have also shown that the counterpart of the TOV equation for the spatially homogeneous models is just the familiar equation $\rho+P=0$\Mov{, defining a cosmological constant-type behaviour}, both in  the hyperbolic and spherical cases. This implies a violation of the strong energy condition in both cases, \Mov{added to the above mentioned violation }
of the weak energy condition in the hyperbolic case}.
We  illustrate our unified treatment obtaining analogs of Schwarzschild interior solution,  for an incompressible fluid $\rho = \rho_0$ constant. \end{abstract}


\title{
New perspectives on the TOV equilibrium from a dual null approach
}

\author{Alan Maciel}
\email{alan.silva@ufabc.edu.br}

\affiliation{Centro de Matemática, Computação e Cognição, Universidade Federal do ABC,\\
 Avenida dos Estados 5001, CEP 09210-580, Santo André, São Paulo, Brazil. }

\author{Morgan Le Delliou}
\email{delliou@lzu.edu.cn,delliou@ift.unesp.br}
\affiliation{Institute of Theoretical Physics, School of Physical Science and Technology, Lanzhou University, 
No.222, South Tianshui Road, Lanzhou, Gansu 730000, P R China}
\altaffiliation[Also at ]{Instituto de Astrof\'isica e Ci\^encias do Espa\c co, Universidade de Lisboa, Faculdade de Ci\^encias, 
Ed. C8, Campo Grande, 1769-016 Lisboa, Portugal}


\author{Jos\'e P. Mimoso}%

\email{jpmimoso@fc.ul.pt}

\affiliation{
 Departamento de F\'{i}sica and Instituto de  Astrof\'{i}sica e Ci\^encias do Espa\c co, Faculdade de Ci\^{e}ncias da
Universidade de Lisboa, Campo Grande, Ed. C8 1749-016 Lisboa,
Portugal}

\maketitle

\section{Introduction}

When one describes spherical stars in equilibrium it is well known that 
the matter distribution must satisfy the Tolman-Oppenheimer-Volkoff (TOV) equation \cite{Tolman:1934za,oppenheimer-1939a}.
The TOV equation appear as the relativistic counterpart of the classical condition for hydrostatic equilibrium. It gives a first approximation to describe  virtually any body in the sky which is large enough such that its dynamics is dominated by gravity, and stationary enough to enable us to assume it to be in static  equilibrium, such as planets and stars. Due to their evident relevance, many solutions for this type of configurations have been found \cite{Zurek:1984zz, Delgaty:1998uy, Berger:1985vx,Fodor:2000gu,Dev:2000gt, Rahman:2001hp,Gorini:2008zj,Zou:2013bra,Herrera:2013fja,Ozel:2016oaf,Herrera:2017hua,Faraoni:2018xwo,Martins:2018ihu}, and more specifically, different formalisms \cite{Arbanil:2013pua,Carloni:2017rpu, Carloni:2017bck, Isayev:2018hqx,Papadopoulos:1999kt} and solutions generating techniques have been developed \cite{lake-2003,Boonserm:2006up}. Extensions of the TOV equation have also been investigated in the framework of modified gravity theories~\cite{Olmo:2012er,Glampedakis:2015sua,Jain:2015zcs,Velten:2016bdk,Wojnar:2016bzk,Bronnikov:2016odv,Astashenok:2017dpo}. Yet a unified characterization of the underlying features of the TOV equation  has attracted little attention, and this is what concerns us in the present work. 
{
As it happens with the generalised perception of  Birkhoff theorem \cite{Birkhoff1923} as being restricted to the spherically symmetric case, which is a misled assumption as shown for instance  by C. Bona \cite{Bon1988},and more recently {discussed} 
in \cite{Maciel:2018tnc}, a similar idea is very much spread regarding the TOV equation. In the present work we aim at showing that this is a restrictive view, and that a generalised TOV equation also characterises the equilibrium of models endowed with other symmetries besides the spherical.  
}

Focusing on GR fluid dynamics 
(for a review of the pioneering work on this viewpoint see for instance, and references therein \cite{Ehlers:1993gf,Ellis:1971pg,ellis-1999}), we
can trace it back to Hawking and Penrose's singularity theorems \cite{hawking}, and it eventually 
enables one to tackle ``small scale'' problems with the same tools are those applicable to ``large scale'' ones. In this context, an approach using the properties of light cones is most likely to reveal the structures of spacetime both at small and large scales. 

The dual null formalism\footnote{{We follow here the nomenclature coined by Sean A. Hayward \cite{Hayward:1993dq, Hayward_1993, Hayward_1993b} although 
in some references this formalism is referred to as double-null.}}
offers  a description of the spacetime based on the properties of the optical flow. The latter is characterised by two linearly independent null congruences  which are orthogonal to some codimension-two foliation of the spacetime.  This approach  shares some of the convenience of the choice of dual null coordinates, but it has the significant advantage of being a coordinate free formalism.  Furthermore it also reveals, by construction, the causal structure of the spacetime in a natural way. It has been originally introduced to study general relativistic problems associated with the behaviour of dynamical black holes \cite{Hayward:1993mw, Senovilla:2011fk},  but it is also most convenient to 
analyse other diverse questions. For instance, it has been considered in connection with the definition of energy in more general geometries \cite{Hayward:1993ph}, with the gravitational collapse of fluids \cite{Maciel:2015vva}, and even with the definition of generalized horizons in modified gravity \cite{Maciel:2015ypv}. The dual null formalism has 
been useful  to explicit the "linear" behavior of gravity for sources that satisfy the hypotheses of the Birkhoff theorem \cite{Maciel:2018tnc}, as well. 

Here, we apply the dual null formalism to analyse, in a unified way, the spacetimes which admit a codimension-two foliation with constant curvature leaves. This comprises the spherical, planar and hyperbolic 
symmetries, sourced by a perfect fluid. Aside from the Killing vector fields that are tangent to those surfaces of symmetry, we assume the existence of an additional symmetry generated by a  Killing vector field 
orthogonal to those surfaces at each event. A particular case of this setup, where the symmetry is spherical and the Killing vector is timelike, corresponds to the 
spherically symmetric perfect fluid in hydrostatic equilibrium, which leads us to the well-known TOV equation. As we will show this celebrated equation arises most naturally in the dual-null framework which, moreover, allows us to  generalize it for the planar and hyperbolic cases. 
\Mov{To the best of our knowledge, this generalisation of the geometry underlying
the TOV equation, stepping 
beyond spherical symmetry, has never been seen before, and leads to consequences which are far from trivial. 

Since in planar and hyperbolic geometries the  spatial hypersurfaces are open, this extension requires the novel introduction of a mass-energy "parameter". In fact one needs to promote a generalization of the Misner-Sharp/Hawking-Hayward definitions of the mass-energy distribution, which overcomes the 
problem of the divergence of the latter quantities due to the natural threading with infinite surfaces. 

Finally and unexpectedly, the third novelty of our explorations of the planar and hyperbolic geometries stems from  the physically significant appearance of violations of the weak energy conditions in order to maintain \Amv{hydrostatic} equilibrium. 
This takes the form of negative mass, physically translating 
repulsive curvature effects,} \Amv{which suggest a link to repulsive source models, as those proposed to mimic dark energy, generate bouncing universes, or support classical wormholes \cite{Cattoen2005, Lobo:2009du}.
}

When the metric is characterised by a spacelike Killing vector
, we have spatially homogeneous spacetimes that, as we will show, 
correspond to some of the Bianchi spacetimes, or Kantowski-Sachs, as expected. \Amv{The hydrostatic equilibrium on those spacetimes are only possible when their source is a cosmological constant in the non flat cases, also implying the violation of energy conditions.}


We proceed as follows: In Sec.~\ref{sec:Defs} we start by giving a short introduction to the dual null formalism, using from the onset the symmetries assumed in our class of problems 
to simplify 
expressions. In Sec.~\ref{sec:OrthoKillingV}
We then 
prove a proposition that states that the Killing vector two-expansion 
always vanishes. In order to interpret the geometrical quantities that appear, and to  establish their  underlying physical content, we discuss, in Sec.~\ref{sec:Energy}, the mass-energy definition in these spacetimes. From the property found in  Sec.~\ref{sec:OrthoKillingV}, 
we 
show that one derives either the equation of hydrostatic equilibrium when the Killing is timelike, in Sec.~\ref{sec:subTKV}, or 
the evolution equation when the Killing is spacelike, in Sec.~\ref{sec:subSKV}%
. Finally, in Sec.~\ref{sec:PHS} we look for planar and hyperbolic symmetric analogs of Schwarzschild interior solution, by assuming that the fluid is uncompressible and solving the unified 
TOV equations.

\section{Main Assumptions and Definitions}\label{sec:Defs}

We consider metrics that have a codimension-two maximally symmetric foliation, 
and 
can be written as
\begin{gather}
\ud s^2 =  N_{ab} \ud x^a \ud x^b + Y^2(x^c) \left( 
\ud \theta^2 + S_{\epsilon}^2  \ud \phi^2 \right)\,, 
\label{eq:metric-dualnull}
\end{gather} 
where
\begin{gather}
 S_\epsilon = \left\lbrace \begin{tabular}{r}
                            $\sin \theta$, for $\epsilon = 1$\\
                            $1$, for $\epsilon = 0$\\
                            $\sinh \theta$, for $\epsilon = -1$\\
                           \end{tabular}~~,\right.
\nonumber
\end{gather}
and where we divide the tangent space $\mathcal{T}$ at each event in two orthogonal subspaces $\mathcal{T}=\mathcal{N} \oplus \mathcal{S}$. Here 
$\mathcal{S}$ is the subspace generated by the orbits of $(\theta, \phi)$ and $\mathcal{N}$, 
the subspace of $\mathcal{T}$ 
orthogonal to $\mathcal{S}$. The $x^a$ coordinates are chosen 
orthogonal to $\mathcal{S}$, which gives the metric in the warped sum form of Eq.~\eqref{eq:metric-dualnull}.

We denote $s_{ab}=Y^2 \gamma_{ab}$ the induced metric in each leaf of the foliation where $Y(x^c)$ { is the warp factor. }
Evidently, $\gamma_{ab}:=\delta^\theta_a\delta^\theta_b+S_{\epsilon}^2 \delta^\phi_a\delta^\phi_b$ has constant curvature and does not depend on the 
coordinates $x^a$ which identify each leaf $\Sigma_{x^c}$, defined as the locus spanned by the orbits of $\theta$ and $\phi$ for fixed $x^c$.
We define an orthonormal two dimensional basis $(n^a\,,e^a)$ for $\mathcal{N}$, whose induced metric is $N_{ab}$, according to Eq.~\eqref{eq:metric-dualnull}. This basis satisfies
\begin{gather}
- n^a n_a = e^ae_a = 1\,, \quad n^a e_a = n^a 
s_{ab} = e^a s_{ab} = 0\,.
\end{gather}

We may also define a dual null basis for the same subspace from $n^a$ and $e^a$ by
\begin{gather}
 k^a = \frac{1}{2} \left (n^a + e^a \right) \,, \quad l^a = \frac{1}{2} 
\left(n^a - e^a \right)\,, \nonumber\\
 n^a = k^a + l^a \,, \quad e^a = k^a - l^a \, ,\label{eq:NullNonNull}
\end{gather}
which satisfies
\begin{gather}
k^ak_a = l^a l_a =0 \,, \quad  k^a l_a = -\frac{1}{2}\,.
\end{gather}
{The metric $g_{ab}$ can be written as
\begin{gather}
    g_{ab}= \frac{2}{k^c l_c}k_{(a}l_{b)}+s_{ab}\,. \label{eq:gdecomposition}
\end{gather}}
We associate the null expansion for each null vector 
as follows
\begin{gather} 
\Theta_{k} =\frac{1}{2}s^{ab} \Lie_k s_{ab}= \frac{1}{2} Y^{-2}\gamma^{ab}\Lie_k Y^2 \gamma_{ab} = \frac{2}{Y} k^a \partial_a Y\,.\label{eq:2expansiondef}
\end{gather}
We may extend the definition of null expansion to 
timelike and spacelike vectors in $\mathcal{N}$, calling 
it the two-expansion, since it measures 
the rate of variation of area, as in the null case. We may define the mean curvature form $\mathcal{K}_a = \partial_a \ln Y^2 $, such that, we obtain for the two-expansion $\Theta_{(u)}$ of any vector $u^a$ in $\mathcal{N}$
\begin{gather}
    \Theta_{(u)} = u^a \mathcal{K}_a\,.\label{eq:MeanCurvForm}
\end{gather}

We describe our spacetimes  
by means of the behaviour of the null expansion,  casting the Einstein equations, $G_{ab} = 8 \pi T_{ab}$ in terms of expansions, i.e., by writing the Raychaudhuri equations \cite{Hayward:1993mw,wald, hawking, carroll-2004}. In the latter equations $T_{ab}$ is, as usual, the energy-momentum tensor $T_{ab}$, and  we express it as a general fluid under a 1+1+2 decomposition along $n^a$ and $e^a$ \cite{Clarkson:2002jz} thus reading
\begin{align}
    T_{ab} = & \left(\rho + P\right)n_a n_b + Pg_{ab} +2q_{(a} n_{b)} +\Pi_{ab}\; . \label{EMT_general}
\end{align}
In the latter decomposition $\rho$ is the  energy density, and $P$ is the isotropic pressure, both measured by an observer moving with 4-velocity $n^a$, $q^a=q e^a + \frak{q}^a$ is the heat flow vector, decomposed into its scalar part along $e^a$ and a 2-vector $\frak{q}^a$ on the maximally symmetric surfaces, and, finally, $\Pi_{ab}$ is the anisotropic stress tensor.  $\Pi_{ab}$ is also decomposed  as $\Pi_{ab}=\Pi P_{ab} + 2 \Pi_{(a}e_{b)} + \Pi_{\left\langle ab \right\rangle_s}$ into a scalar part along the flow-orthogonal  symmetric tensor $P_{ab}= s_{ab}-2 e_a e_b$, which is traceless, into a vector part semi-orthogonal to the maximally symmetric surface with inner vector $\Pi^a$, and into a fully embedded tensor $\Pi_{\left\langle ab \right\rangle_s}$. 

In terms of the metric (\ref{eq:gdecomposition}), the energy-momentum tensor can be expressed as
\begin{align}
    T_{ab} = & \left[\rho + P + 2\left(q - \Pi\right)\right]k_a k_b 
             +\left[\rho + P - 2\left(q + \Pi\right)\right]l_a l_b\nonumber\\
            & +2 k_{(a} l_{b)}\left[\rho - P + 2\Pi\right] + \left[ P - \Pi\right]s_{ab} + 2 \frak{q}_{(a} \left[k+l\right]_{b)}\nonumber\\
            & + 2 \Pi_{(a}\left[k-l\right]_{b)} + \Pi_{\left\langle ab \right\rangle_s}\; ,
\end{align}
and thus  its projected components yield
\begin{align}
    T_{ab}k^a k^b = & \frac{1}{4}\left[\rho + P - 2\left(q + \Pi\right)\right]\,,\\
    T_{ab}k^a l^b = & \frac{1}{4}\left[\rho - P + 2\Pi\right]\,,\\
    T_{ab}l^a l^b = & \frac{1}{4}\left[\rho + P + 2\left(q - \Pi\right)\right]\,.
\end{align}
The Raychaudhuri and constraint equations then 
read
\begin{subequations}
\begin{gather}
\Lie_k \Theta_{(k)} = \nu_k \Theta_{(k)} - \frac{\Theta_{(k)}^2}{2}  - 
8\pi T_{ab}k^ak^b\,, \label{eq:uu-matter}\\
\Lie_l \Theta_{(l)} = \nu_l \Theta_{(l)} -  
\frac{\Theta_{(l)}^2}{2}  - 8\pi T_{ab}l^a l^b \,, \label{eq:vv-matter}\\
 \Lie_k \Theta_{(l)} +\Lie_l \Theta_{(k)}= -  \Theta_{(l)} 
\nu_k - \Theta_{(k)} \nu_l   - \nonumber\\ 2\,\Theta_{(k)} \, \Theta_{(l)} +
\epsilon\frac{ 2\,k^a l_a}{Y^2} + 
16\pi\,T_{ab}k^a l^b \label{eq:uv-matter}\,,
\end{gather}
\end{subequations}
where we included the inaffinities $\nu_k$ and $\nu_l$, defined as
\begin{gather}
    \nu_k = \frac{1}{k^c l_c}l^b k^a \nabla_a k_b \, \qquad \nu_l = \frac{1}{k^c l_c}k^b l^a \nabla_a l_b\,.
\end{gather}

In this work we 
adapt our vector basis to a fluid source, such that $n^a$ gives its flow. Therefore, it will be useful to relate our quantities to this flow. By construction, the flow $n^a$ 
is 
always 
orthogonal to the surfaces of symmetry and will be characterized by two quantities
\begin{gather}
   \mathcal{A} = e^a \dot{n}_a = e^an^b \nabla_b n_a\,, \qquad 
   \mathcal{B} = e^a n'_a = e^a e^b \nabla_b n_a \, .
\end{gather}
The scalar $\mathcal{A}$ gives us the acceleration of the flow, a positive sign meaning that the acceleration is outwards in the spherical, compact case. The scalar $\mathcal{B}$ gives the change of direction of 
$n^a$ as we travel along $e^a$. It is the $e-e$ component of the extrinsic curvature $K_{ab}$ of the 3-space orthogonal to this flow, since 
\begin{gather}
    K_{ab}= \frac{1}{2}\Lie_n h_{ab}\,,
\end{gather}
where $h_{ab}= g_{ab} + n_a n_b$. We may also write
\begin{gather}
    h_{ab} = e_a e_b + Y^2 \gamma_{ab}\,,
\end{gather}
which gives
\begin{gather}
    K_{ab}= \mathcal{B} \, e_a e_b + \frac{\Theta_{(n)}}{2} \, Y^2 \gamma_{ab}\,.\label{eq:extrinsic}
\end{gather}

The trace of Eq.~\eqref{eq:extrinsic} gives us the flow of the volumetric expansion $\Theta_3 = \nabla_a n^a = K_a^a$ as
\begin{gather}
    \Theta_3 = \mathcal{B} + \Theta_{(n)}
\end{gather}

In order to relate 
our quantities with the flow scalars, we compute the shear scalar $\sigma$, by taking the traceless part of $K_{ab}$. We obtain
\begin{gather}
    \sigma = \frac{\Theta_{(n)}}{6} - \frac{\mathcal{B}}{3}\,, \label{eq:B}
\end{gather}
which implies
\begin{gather}
    \frac{\Theta_3}{3}+ \sigma =\frac{\Theta_{(n)}}{2}\,,\label{eq:theta3}
\end{gather}
in agreement with the result obtained in Ref.~\cite{Maciel:2015vva}. 

Using the inaffinities of the null basis vectors, 
$\mathcal{A}$ and $\mathcal{B}$ can be expressed as
\begin{gather}
    \mathcal{A} = \nu_k - \nu_l\,, \qquad \mathcal{B} = \nu_k + \nu_l\,.
\end{gather}

\section{Orthogonal Killing Vector}\label{sec:OrthoKillingV}

We now 
assume that our metric has a Killing vector orthogonal to maximally symmetric surfaces{. { Our symmetry requirements imply that it commutes with the symmetry generators on the foliation.} We denote this }
hypersurface orthogonal Killing vector field $\chi^a$. It satisfies the Killing equation,

\begin{gather}
 \Lie_\chi g_{ab} = 0\,.
\end{gather}

\begin{proposition}
 If a spacetime is 
 described by a metric of the form \eqref{eq:metric-dualnull} and admits an orthogonal Killing vector $\chi^a \in \mathcal{N}$, then  $\Theta_{\chi} = 0$.
\label{prop:killing}
\end{proposition}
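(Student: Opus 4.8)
The plan is to compute the two-expansion $\Theta_\chi$ directly from its definition and then extract the single scalar Killing condition that survives on the symmetric leaves. By Eq.~\eqref{eq:MeanCurvForm}, any $\chi^a\in\mathcal{N}$ has two-expansion $\Theta_\chi = \chi^a \mathcal{K}_a = \chi^a \partial_a \ln Y^2 = Y^{-2}\chi^a\partial_a Y^2$, so the assertion $\Theta_\chi = 0$ is equivalent to saying that $\chi$ annihilates the warp factor, $\chi^a\partial_a Y^2 = 0$. The entire content of the proposition is therefore to derive this one fact from $\Lie_\chi g_{ab} = 0$.

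First I would work in the adapted coordinates of Eq.~\eqref{eq:metric-dualnull}, in which $\chi^a\in\mathcal{N}$ has vanishing angular components $\chi^\theta = \chi^\phi = 0$, and in which the metric is block diagonal with $g_{\theta\theta} = Y^2$ depending only on the $x^c$. Evaluating the Killing equation on the purely angular $\theta\theta$ block, $(\Lie_\chi g)_{\theta\theta} = \chi^c\partial_c(Y^2) + 2\, g_{c\theta}\,\partial_\theta\chi^c$, the second term drops because $g_{c\theta}$ is supported only on $c=\theta$ and $\partial_\theta\chi^\theta = 0$, while $\partial_\theta(Y^2)=0$ collapses the first term to $\chi^a\partial_a(Y^2)$. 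Setting the whole expression to zero yields exactly $\chi^a\partial_a Y^2 = 0$, and hence $\Theta_\chi = 0$.

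Alternatively, and more invariantly, I would use the trace form $\Theta_\chi = \tfrac12\, s^{ab}\Lie_\chi s_{ab}$ implied by Eq.~\eqref{eq:2expansiondef}. Writing $s_{ab} = g_{ab} - \tfrac{2}{k^c l_c}k_{(a}l_{b)}$ from Eq.~\eqref{eq:gdecomposition} and contracting with $s^{ab}$, the first piece gives $s^{ab}\Lie_\chi g_{ab} = 0$ by the Killing equation, while the $\mathcal{N}$-supported piece contracts to zero against the $\mathcal{S}$-supported $s^{ab}$ since $\chi\in\mathcal{N}$ generates no angular components under $\Lie_\chi$. This again delivers $s^{ab}\Lie_\chi s_{ab}=0$, i.e.\ $\Theta_\chi=0$, without ever choosing coordinates.

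The main obstacle, and the only place requiring care, is justifying that the mixed and angular cross terms genuinely vanish, i.e.\ that $\chi\in\mathcal{N}$ carries no component along the leaves and that $\Lie_\chi$ cannot generate one. This is precisely where the remark preceding the proposition is used: because $\chi$ commutes with the symmetry generators of the foliation and is orthogonal to the leaves, it preserves the warped-product split $\mathcal{T}=\mathcal{N}\oplus\mathcal{S}$, so that $\chi^\theta=\chi^\phi=0$ is maintained and $Y$ stays a function of the $x^c$ alone. Granting this structural fact, the reduction of the Killing equation to the single scalar condition $\chi^a\partial_a Y^2=0$ is immediate, and the proposition follows.
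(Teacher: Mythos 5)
Your proof is correct and follows essentially the same route as the paper: both restrict the Killing equation to the leaf directions of the warped product, use $\chi^a\in\mathcal{N}$ (so $\chi^\theta=\chi^\phi=0$) together with the fact that $Y$ depends only on the $x^c$ and $\gamma_{ab}$ only on the angles, and conclude $\chi^a\partial_a Y^2=0$, i.e.\ $\Theta_\chi=0$. The paper contracts $\Lie_\chi g_{ab}=0$ with $Y^{-2}\gamma^{ab}$ and shows the residual term $\Lie_\chi\gamma_{ab}$ vanishes, which is just the trace version of your component/trace computations.
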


\begin{proof}
 We may write, from Eq.~\eqref{eq:2expansiondef}, $\Theta_{\chi} =\frac{1}{2}s^{ab}\Lie_\chi s_{ab}$ \\$= \frac{1}{2} Y^{-2}\gamma^{ab} \Lie_{\chi} Y^2 \gamma_{ab}$, 
and
 \begin{gather}
  g_{ab} = N_{ab} + Y^2 \gamma_{ab}\,.
 \end{gather}
Then
\begin{gather}
 0 = Y^{-2}\gamma^{ab} \Lie_{\chi} g_{ab} = Y^{-2}\gamma^{ab}\Lie_{\chi} N_{ab} + 
2\Theta_{\chi}  \nonumber\\
 = -Y^{-2}N^{ab}\Lie_{\chi} \gamma_{ab} + 2 \Theta_{\chi}\,. \label{eq:thetachi}
\end{gather}
However
\begin{gather}
 \Lie_{\chi}\gamma_{ab} = 0\,,
\end{gather}
since $\chi^a$ does not {admit components }
in $\mathcal{S}$
{and }
${\gamma}_{ab}$ {doesn't }depend on  coordinates along $\mathcal{N}$. 
Therefore, Eq.~\eqref{eq:thetachi} implies that $\Theta_\chi = 0$.
\end{proof}

{Consequently}
, if there is an extra symmetry with orbits orthogonal to those of the maximally symmetric leaves of the foliation, the two-expansion
of its generator vanishes. This also implies that if $\ud Y$ is spacelike, then $\chi_a$ is timelike and vice-versa. 
If $\ud Y$ is null, the Killing vector will also be null.

\section{Mass-Energy}\label{sec:Energy}

In order to properly interpret our spacetimes, we have to understand their mass-energy content. There is 
a 
widely known mass-energy definition suitable to the spherically symmetric case, namely the Misner-Sharp mass-energy \cite{Misner:1964je,Hayward:1994bu},  
defined regardless of asymptotic assumptions. However, as we also intend to analyze nonspherical spacetimes in this work, we are 
lead to a more general mass-energy definition such as the Hawking-Hayward's one (hereafter HH) \cite{Hayward:1993ph,Faraoni:2015cnd}. The HH mass-energy gives the mass-energy content inside a closed compact surface in terms of an integral over that surface, in a manner similar to the Gauss law in Newtonian gravity. This quasilocal mass-energy has been explored in different contexts, such as seen in Refs.~\cite{Hayward:1997jp,Prain:2015tda,Faraoni:2019rdj} .

In the case where the $\Sigma_{x^c}$ are spheres, which are compact, we can compute the Hawking-Hayward mass-energy  enclosed by $\Sigma$ (we drop the $x^c$ index for short) as
\begin{gather}
    M_\Sigma = \frac{1}{8\pi}\sqrt{\frac{A}{16\pi}}\int_\Sigma \left[\mathcal{R} -\right. \nonumber\\
    \left.\frac{1}{k^a l_a}\left(\Theta_{(k)} \Theta_{(l)} -\frac{1}{2}\sigma_{(k)\,ab}\sigma^{(l)\,ab} - 2\omega_a \omega^a \right)\right] \ud \Sigma \label{eq:HHmass}
\end{gather}
where $\mathcal{R}$ is the two-dimensional Ricci scalar and $A$ is the area of $\Sigma$,  $\sigma_{(k)\,ab}$ and $\sigma^{(l)\,ab}$ are the two-dimensional shear tensors along $\Sigma$, associated with the $k$ and $l$ congruences, respectively, and $\omega^a$ is the twist vector given by the projection on $\Sigma$ of the commutator of the null basis vectors. We have included a factor of $-\dfrac{1}{k^a l_a}$ in the optical scalars part of the mass-energy, compared with 
the formula present in Ref.~\cite{Hayward:1993ph}, in order to take account of our different normalization of the null normals.

Our symmetry assumptions imply that the only non-vanishing optical scalar 
on $\Sigma$ is the null expansion. Therefore, the Hawking-Hayward mass-energy is reduced to
\begin{gather}
    M_\Sigma = \frac{1}{8\pi}\sqrt{\frac{A}{16\pi}}\int_\Sigma \left[ \mathcal{R} - \frac{1}{k^a l_a}\Theta_{(k)} \Theta_{(l)} \right]\, \ud \Sigma \label{eq:HHmassreduced}
\end{gather}
Since we assume that $\Sigma$ is maximally symmetric, we have $\mathcal{R}=\dfrac{2 \epsilon}{Y^2}$. We also have
\begin{gather}
    \Theta_{(k)} \Theta_{(l)} = k^a \partial_a \ln Y^2 \, l^b \partial_b \ln Y^2 = k^a l^b \partial_a \ln Y^2 \partial_b \ln Y^2 =\nonumber\\
    = k^{(a}l^{b)}\partial_a \ln Y^2 \partial_b \ln Y^2 = \frac{k^cl_c}{2} g^{ab}\partial_a \ln Y^2 \partial_b \ln Y^2 = \nonumber\\
    \nonumber\\ \frac{1}{2}k^c l_c || \ud \ln Y^2||^2 = k^c l_c \frac{2}{Y^2} || \ud Y ||^2 \Rightarrow \nonumber\\
    \frac{\Theta_{(k)} \Theta_{(l)}}{k^cl_c} = \frac{2}{Y^2}||\ud Y||^2\,,
\end{gather}
{where we used Eq.~\eqref{eq:gdecomposition} in the fourth step.}

For the spherical case $\epsilon = 1$ and $A = 4\pi Y^2$, we obtain the known interpretation of $|| \ud Y||$ in terms of the Misner-Sharp mass-energy, which coincides with the Hawking-Hayward one
\begin{gather}
M_\Sigma = \frac{Y}{2}\left(1 - || \ud Y||^2 \right) \Leftrightarrow
|| \ud Y ||^2  = 1 - \frac{2 M}{Y}\,.
\end{gather}

In the planar and hyperbolic cases ($\epsilon = 0$ and $\epsilon = -1$, respectively), the Hawking-Hayward mass is not  conveniently defined for the integration domain set by our preferred foliation, as it requires a closed compact surface.

 In this work, we aim to treat all three symmetry types in the same manner. Therefore, we need to find a mass-energy definition which might be equivalent to the HH mass-energy, but suitable to deal with (instead of adapted for) non compact domains in order to take advantage of the planar or hyperbolic symmetry. We can make such an extension of the HH mass-energy, as long as their boundary correspond to a pair 
$\Sigma$ of symmetric two-surfaces of symmetry corresponding to the same warp factor $Y$. Of course, those domains are infinite and have an infinite mass-energy content in general. However, as they are homogeneous along the surfaces of symmetry, we can successfully 
adapt the HH mass-energy definition in order to obtain a finite \emph{mass-energy parameter} with those cases. They then 
describe an infinite mass-energy distribution, 
homogeneous along the surfaces of symmetry, with a finite density. 

We proceed by first making the replacement
\begin{gather}
   \dfrac{1}{8\pi}\sqrt{\dfrac{A}{16 \pi}}\to\dfrac{Y}{4\pi \kappa}, \label{eq:replacement}
\end{gather}
in order to 
keep its dimensionality, and 
eliminating the explicit dependence on the area of $\Sigma$. Evidently, by setting $\kappa = 4$ we recover the Hawking-Hayward mass-energy in the spherical case. This step is justified by the fact that originally this factor was introduced to correct the dimensionality of the mass-energy, and to make it match the Arnowitt-Deser-Misner (ADM) mass~\cite{Arnowitt:1959ah}, where both are well defined. Since our symmetric spacetimes allow an 
"areal scalar" 
as the warp factor Y, we can replace $\sqrt{A}$ by $Y$ as the quantity with dimension of length associated to each surface of symmetry.

We then 
define the quasi-local mass-energy parameter $\mu(Y)$ by
\begin{gather}
M_\Sigma = \frac{\mu(Y)}{4\pi} \int S_\epsilon^2 (\theta) \ud \theta \ud \phi\,,\label{eq:mudefinition}
\end{gather}
and we write
\begin{multline} 
    \frac{Y}{4\pi \kappa} \left[\mathcal{R} - \frac{\Theta_{(k)} \Theta_{(l)}}{k^c l_c}\right]\int_\Sigma \ud \Sigma = \\
  \frac{Y}{4 \pi \kappa} \left[2\epsilon - 2||\ud Y||^2\right]\int S^2_\epsilon (\Theta)\ud \theta \ud \phi\,.\label{eq:HHmassepsilon}
\end{multline}
We equate Eqs. \eqref{eq:mudefinition} and \eqref{eq:HHmassepsilon} and eliminate the improper area integral on both sides
\begin{gather}
|| \ud Y ||^2 = \epsilon - \frac{\kappa \mu (Y)}{2Y} \,. \label{eq:muequation}
\end{gather}

An alternate route to Eq.~\eqref{eq:muequation} can be obtained by computing the HH mass-energy in a finite domain, 
symmetric with respect to the central plane or wire,  
$Y = 0$, 
and taking the limit where the domain tends to be the whole surface. The finite integration domain  consist of the union of 
\begin{enumerate}
    \item a subset of the $\Sigma_Y$, that we denote $\Gamma_r$, bounded by a circle $\gamma\Mov{_r}$ of radius $r$ on the $(\theta, \phi)$ coordinate plane and 
    \item a compact surfaces given by the surfaces $\Delta_r$ defined by $\gamma_r$ transported along $Y$ orbits.
\end{enumerate} 
It forms
a closed surface, corresponding to a part of a cylinder bounded by $Y = \text{constant}$ surfaces in the space of coordinates $(Y, \theta, \phi)$. Therefore, the HH mass-energy enclosed by those surfaces will by finite, and given by
\begin{equation}
    M_r = \frac{1}{8\pi}\sqrt{\frac{A_r}{16\pi}}\left(\int_{\Gamma_r} (\dots) S_\epsilon^ 2\ud\theta \ud \phi + \int_{\Delta_r} (\dots) \ud\Delta \right) \label{eq:MR}
\end{equation}
where $(...)$ replaces the integrand of Eq.~\eqref{eq:HHmass}. In the limit $r \to \infty$, the first integral in Eq.~\eqref{eq:MR} scales as $r^ 2$ while the second one scales as $r$. This means that, in the limit $r \to \infty$, and repeating the replacement in Eq.~\eqref{eq:replacement} we obtain
\begin{gather}
    \frac{M_r}{A_r} \to \frac{\mu(Y)}{4\pi Y^ 2}
\end{gather}

Equation~\eqref{eq:muequation} coincides with the known mass function 
[see Eq. (15.7a) in \cite{stephani-exact}]
which appears as we integrate the Einstein equations of specific spacetimes with metrics of the form \eqref{eq:metric-dualnull} for planar and hyperbolic symmetries. From now on, we will consider Eq.~\eqref{eq:muequation} with the choice $\kappa = 4$ as the mass-energy definition.




\section{Evolution equations}\label{sec:evolution}

\subsection{Timelike Killing Vector}\label{sec:subTKV}

We assume $\chi^a \chi_a < 0$. In this case, the spacetime is static, and $n^a \sim 
 \chi^a$. 
Therefore, from Proposition \ref{prop:killing}, $\Theta_{(n)} =0$ everywhere, and $\ud Y$ is spacelike, since it is orthogonal to $n^a$. If $\Theta_{(n)}$ 
vanishes everywhere, this means that the fluid has no radial velocity, 
therefore we are dealing with a static fluid with a flow parallel to the Killing vector field.

In order to characterize its static equilibrium,  we need to compute the 
derivative of the flow 2-expansion along the flow itself:
\begin{gather}
\Lie_n \Theta_{(n)} = 0\,,
\end{gather}
since $\Theta_{(n)} = 0$ everywhere.

We may write $\Lie_n \Theta_{(n)}$ in terms of the null expansions as
\begin{gather}
    \Lie_n \Theta_{(n)} = \Lie_k \Theta_{(k)} + \Lie_l \Theta_{(l)} + \Lie_l \Theta_{(k)} + \Lie_k \Theta_{(l)}\,.
\end{gather}

Substituting the Eqs.~\eqref{eq:uu-matter}, \eqref{eq:vv-matter}, and \eqref{eq:uv-matter}, we obtain
\begin{gather}
    \Lie_n \Theta_{(n)} = - \frac{(\Theta_{(k)}+  \Theta_{(l)})^2}{2} - \nonumber\\\Theta_{(k)}\Theta_{(l)} - \frac{\epsilon}{Y^2}- 8\pi T_{ab}e^ae^b + \mathcal{A} (\Theta_{(k)}-\Theta_{(l)})\,.\label{eq:Lienthetan}
\end{gather}

{Recall that }
we are assuming $\Theta_{(n)} = \Theta_{(k)} + \Theta_{(l)}=  0${, and that }
$\Theta_{(k)} - \Theta_{(l)} = \Theta_{(e)}${, using Eqs.~(\ref{eq:MeanCurvForm}) and (\ref{eq:NullNonNull})}. We identify here 
$\Theta_{(k)}\Theta_{(l)}$ as the mass term, since it equals $ \dfrac{2}{Y^2} || \ud Y ||^2 = \dfrac{2}{Y^2}\left({ \epsilon} - \dfrac{2\mu(Y)}{Y}\right)$.

Taking the source to be a perfect fluid, then the energy momentum tensor (\ref{EMT_general}) reduces to
\begin{gather}
    T_{ab} = \rho n_a n_b + P (e_a e_b + s_{ab})\,.
\end{gather}
Contracting the conservation of the energy-momentum tensor with $e_b$
(Euler equation in \cite{Mimoso:2009wj}) we get
\begin{gather}
e_b \nabla_a T^{ab} = (\rho + P) \dot{n}^b e_b + e^a \nabla_a P = 0 \Rightarrow 
\nonumber\\
\mathcal{A} = - \frac{e^a \partial_a P}{\rho + P}\,.\label{eq:Aequation}
\end{gather}

Since $\Theta_{(n)} = 0$, this implies that $e^a$ is proportional to $\partial_Y$, and as $e^a$ is normalized, we have $e_a = \dfrac{1}{|| \ud Y||}\partial_a Y$. Imposing $e^a e_a = 1$ we obtain
\begin{gather}
    e^a = || \ud Y||(\partial_Y)^a  \, , \label{eq:Estatic}
\end{gather}
which gives us
\begin{gather}
    \mathcal{A} \Theta_{e} = -||\ud Y ||^2 \, \frac{2}{Y} \frac{ \partial_Y P}{\rho + P}\,.
\end{gather}

Therefore, replacing $|| \ud Y||^2$ by its meaning in terms of mass, $\Lie_n \Theta_{(n)} = 0$ corresponds to 
\begin{gather}
    \left(\epsilon - \dfrac{2\mu(Y)}{Y}\right)\frac{2}{Y} \frac{ \partial_Y P}{\rho + P} =  - 2 \frac{\mu(Y)}{Y^3}- 8 \pi P\,,
\end{gather}
or, alternatively,
\begin{gather}
  \frac{ \partial_Y P}{\rho + P}   = -\left(   \frac{\mu(Y)}{Y^2}+ {4\pi PY}\right)\left({\epsilon}- \frac{2\mu(Y)}{Y}\right)^{-1}\,, \label{eq:TOV}
\end{gather}
\Amv{which is what we call \emph{the unified TOV equation}. It reduces to the well-know TOV equation for spherically symmetric spacetimes when $\epsilon = 1$, and it corresponds to the  equation of hydrostatic equilibrium for planar and hyperbolic geometries, in the cases where $\epsilon =0$ and $\epsilon = - 1$, respectively.} This underlines the fact that the TOV equation is a hydrostatic equilibrium equation, and not an equation of state, as it is erroneously stated sometimes.

In order to determine $\mu(Y)$ we consider the $\Lie_e \Theta_{(e)}$ Raychaudhuri equation
\begin{eqnarray}
    \Lie_e \Theta_{(e)} &=&\mathcal{ B} \Theta_{(n)} - \frac{\Theta_{(e)}^2}{2} +\frac{1}{4} \left(\Theta_{(n)}^2 - \Theta_{(e)}^2 \right)\nonumber\\
    &+& \frac{\epsilon}{Y^2 } - 8 \pi T_{ab}n^a n^b\, , \label{eq:Lee}
\end{eqnarray}
which, by using $\Theta_{(n)} = 0 $, and Eq.~\eqref{eq:Estatic} lead us to
\begin{gather}
    ||\ud Y|| \partial_Y \left(\frac{2}{Y} ||\ud Y||\right) = -\frac{3}{Y^2}||\ud Y||^2 + \frac{\epsilon}{Y^2} - 8 \pi \rho\,.\label{eq:dYintermed}
\end{gather}
Substituting Eq.~\eqref{eq:muequation} into Eq.~\eqref{eq:dYintermed}, we obtain 
\begin{gather}
    \partial_Y \mu = 4 \pi \rho Y^2\,,\label{eq:massequation}
\end{gather}
which looks like the 
mass-energy equation of spherical symmetry.  {Here, it should be interpreted as the mass-energy equation in the spherical case, and as a mass-energy parameter equation in the planar and hyperbolic cases. Furthermore, Eqs.~\eqref{eq:massequation} and \eqref{eq:muequation} imply that if the \emph{weak energy condition (WEC) \cite{Martin-Moruno:2017exc} holds} only the spherically symmetric case admits static \Amv{ regular solutions. Indeed, as those solutions require $||\ud Y||^2 > 0$, 
that implies $\mu < 0$ for $\epsilon \leq 0$ and, as in regular spacetimes,
\begin{gather}
    \mu(Y) = 4 \pi \int_0^Y \rho(y) y^2 \ud y \,,  
\end{gather}
this imposes $\rho < 0$.}
}

With Eq.~\eqref{eq:massequation}, the last requirement to solve Eq.~(\ref{eq:TOV}) is the equation of state of the fluid, $f(\rho, P) = 0$ which should come from specific physical considerations.

\subsection{ Spacelike Killing Vector}\label{sec:subSKV}

In the spacelike Killing vector 
case, $\ud Y$ is timelike, the flow $n_a$ is orthogonal to the Killing vector, and the unitary base vector $e^a$ is parallel to it.\Amv{ This imposes no constraint on the sign of $\mu$ according to Eq.~\eqref{eq:muequation}, 
and thus there is no need to violate energy conditions in order to consider these solutions, thorougly studied in cosmology  \cite{Ellis:1998ct}.}

One dynamical equation is given by $\Lie_e \Theta_{(e)}=0$, which according to Eq.~\eqref{eq:Lee} gives:
\begin{equation}
   \mathcal{ B} \Theta_{(n)} - \frac{\Theta_{(e)}^2}{2} +\frac{1}{4} \left(\Theta_{(n)}^2 - \Theta_{(e)}^2 \right)
    + \frac{\epsilon}{Y^2 } - 8 \pi \rho = 0\,. 
\end{equation}
From Proposition~\ref{prop:killing}, we have $\Theta_{(e)} = 0$. Replacing  Eq.~\eqref{eq:B} {in Eq.~\eqref{eq:Lee}},  we obtain
\begin{gather}
\frac{3}{4}\Theta_{(n)}^2 - 3\sigma \Theta_{(n)} + \frac{\epsilon}{Y^2} = {8\pi \rho}\,,\label{eq:LieeThetae}
    \end{gather}
and, {using Eq.~\eqref{eq:theta3}, we can express this Eq.~\eqref{eq:LieeThetae}} in terms of the volume expansion $\Theta_3$ obtaining
\begin{gather}
\frac{ \Theta_3^2}{3}- 3\sigma^2 = 8 \pi \rho - \frac{\epsilon}{Y^2}\,, \label{eq:evolution1}
\end{gather}
which corresponds \Amv{to the generalised Friedmann constraint equation for} the evolution of a homogeneous and anisotropic universe.  
\Amv{In the case $\sigma=0$,  we may identify $\Theta_3 = 3H$, and we 
recover the usual Friedmann equation for the flat ($\epsilon=0$) and open ($\epsilon=-1$) spatially isotropic universes. Notice though that $\sigma=0$ also yields anisotropic, cosmological solutions when the matter content is not a perfect fluid \cite{Mimoso:1993ym}}.


The $\Lie_n \Theta_{(n)}$ Raychaudhuri equation gives the evolution of $\Theta_{(n)}$. Using Eq.~\eqref{eq:Lienthetan}
\begin{gather}
    \Lie_n \Theta_{(n)} = - \frac{\Theta_{(n)}^2}{2} - 
    \Theta_{(k)} \Theta_{(l)} - \frac{\epsilon}{Y^2}\, - 8\pi T_{ab}e^a e^b \Rightarrow \nonumber\\
     \Lie_n \Theta_{(n)} = -\frac{3}{4}\Theta_{(n)}^2 - \frac{\epsilon}{Y^2} - 8 \pi T_{ab}e^a e^b\,.\label{eq:LienThetan2}
\end{gather}

Subtracting Eq.~\eqref{eq:LieeThetae} from Eq.~\eqref{eq:LienThetan2}, we obtain
\begin{gather}
    \Lie_n \Theta_{(n)} = -3\sigma \Theta_{(n)} - 8 \pi \left(\rho + P\right)\,,\label{eq:evolution2}
\end{gather}
which, together with an equation of state relating $\rho$ and $P$ closes our system.
By adding half of the Eq.~\eqref{eq:evolution2} with one third of Eq.~\eqref{eq:evolution1}, we obtain:
\begin{eqnarray}
    \Lie_n \left(\frac{\Theta_3}{3}\right) &+& \left(\frac{\Theta_3}{3}\right)^2 =\nonumber\\ &-&2\sigma \left(\frac{\Theta_3}{3}+ \sigma \right)- \frac{\epsilon}{3Y^2} - \frac{4 \pi}{3} \left(\rho + 3P \right)\,. \label{Raychaudhuri_spacelike-K}
\end{eqnarray}

\Amv{Those homogeneous and anisotropic spacetimes belong to a subclass of Bianchi models~\cite{bianchi2001three,Ellis:1968vb}, with the case $\epsilon = 0$ corresponding
to Bianchi type I universes,   $\epsilon = -1$ corresponding to the Bianchi type III models, and $\epsilon =1$ to the Kantowski-Sachs spacetimes  \cite{Kantowski:1966te}.

An extensive classification and evolution analysis of this family of spacetimes is certainly worthy of interest, and a great deal of work has already been carried out in the literature in this connection \cite{Ellis:1998ct}.  \Mov{However, in }
this work we are 
mainly focused on the hydrostatic equilibrium situations. 
\Mov{Thus, }
our interest will be directed to understanding whether it is possible to find a correspondence between the TOV equation of static equilibrium, and some condition applying to the spatially homogeneous models.  

Imposing staticity amounts in the present case to have  $\Theta_3=0$, $\Lie_n \Theta_3=0$, and $\sigma=0$ in Eqs. (\ref{eq:evolution1}) and (\ref{Raychaudhuri_spacelike-K}).  Reconciling the  reduced equations simply requires  
\begin{equation}
   \rho+P=0 \; ,
\end{equation} 
in the $\epsilon=\pm1$ cases, and has no realisation when $\epsilon=0$, as $\rho=0$ from Eq.(\ref{eq:evolution1}).  Hence we  conclude that the TOV condition interpreted as a cornerstone of stability yields the well-known equation of state characterising a cosmological constant in the non flat  cases. In hindsight, one could have \Mov{anticipated this result}
, 
which emerges here in a self-consistent way. Moreover we see that strong energy condition (SEC) is violated for both cases $\epsilon=\pm 1$, whilst the  WEC is additionally violated for the $\epsilon=-1$ case, as 
follows  from Eq. (\ref{eq:evolution1}). 
} 

 {
\section{incompressible fluid solutions} \label{sec:PHS}

Using our unified TOV equation, Eq.~\eqref{eq:TOV}, we may look for static perfect fluid solutions for all three symmetries considered here. By choosing a timelike coordinate $T$ along the flow\Mov{, 
} making $n_a = -\alpha(Y)\ud T$\Mov{, } and  {the warp factor Y}
, we obtain the following line element in the $(T,Y)$ coordinates:
\begin{gather}
    \ud s^2 = -\alpha^2(Y)\ud T + \frac{\ud Y^2}{\epsilon - \frac{2 \mu(Y)}{Y}}+ Y^2 \ud \Omega_\epsilon\,, \label{eq:dsTY}
\end{gather}
where $\ud \Omega_\epsilon = \left( \ud \theta^2 + S_{\epsilon}^2  \ud \phi^2 \right)$ and the functions $\alpha$ and $\mu$ will be given by solving Einstein equations, i.e., Eqs.~\eqref{eq:TOV} and \eqref{eq:muequation}.

Here, we will apply our unified treatment to find the analogs of Schwarzschild interior solution, that is, we will use the equation of state of an incompressible fluid $\rho = \rho_0$ constant. It is important to note that, as we have discussed in Sec.~\ref{sec:subTKV}, the static solutions with $\epsilon \neq 1$ violate the WEC, therefore we should take $\rho_0 < 0$ in those cases.

Equation~\eqref{eq:Aequation} implies
\begin{gather}
    \frac{\alpha'}{\alpha} = -\frac{P'}{\rho + P} \Rightarrow \alpha = \frac{c_0}{\rho_0 + P}\,,
\end{gather}
where $c_0$ is an integration constant that can be set by rescaling the time coordinate and the prime denotes $Y$ differentiation.

Equation~\Mov{
\eqref{eq:massequation}} gives us
\begin{gather}
    \mu(Y) = \frac{4 \pi \rho_0 Y^3}{3}\,,
\end{gather}
which we replace in Eq.~\eqref{eq:TOV} to obtain
\begin{gather}
    P(Y) = \rho_0 \left( \frac{2\sqrt{|\epsilon - \frac{Y_s}{Y_g}|}}{3\sqrt{|\epsilon - \frac{Y_s}{Y_g}|} - \sqrt{|\epsilon - \frac{Y_s Y^2}{Y_s^3}|}} - 1\right)\,.
\end{gather}
where $Y_g$ is the analog of the radius of the object and is the least positive number that satisfy $P(Y_g)=0$, $Y_s= \dfrac{8\pi \rho_0 Y_g^3}{3}$ is the analog of the Schwarzschild radius, although it can not be interpreted as a location since it will be a negative number. This gives
\begin{gather}
    \alpha = \frac{1}{2}\left(3\sqrt{\left|\epsilon - \frac{Y_s}{Y_g}\right|} - \sqrt{\left|\epsilon - \frac{Y_s Y^2}{Y_g^3}\right|} \right)
\end{gather}
which has a similar form to the interior Schwarzschild solution, where we only change the sign of the mass-energy parameter and change the value of $\epsilon$ in the formula. Of course the physical properties are very distinct, since 
the solutions violate the WEC. 

In 
Figure~\ref{fig:ptov} we compare the pressure for the three cases. \Amv{From the slope of the curves, we notice that only the hyperbolic case presents $P'>0$, 
compensating
the repulsive gravity force in this setup
. This is the opposite of the more familiar situations presented in the spherical and planar cases, where gravity is attractive, 
with $P'<0$ 
sustaining the weight of the configuration.} We can also see that the planar case admits a positive pressure for $0< Y < Y_g$. That means that, as long as mass-energy is negative, we may have static plane configurations over a finite $Y$ interval. On the other hand, the hyperbolic solution only admits positive pressure for $Y > Y_g$, \Mov{so }there is no 
analog of the Schwarzschild interior solution for this foliation,\Amv{ although it can be interpreted as an exterior fluid solution to an internal void. It can thus be matched to a hyperbolic vacuum solution for $Y< Y_g$, 
found as a particular case in Ref. \cite{Maciel:2018tnc}:
\begin{gather}
 \ud s^2 = - \left(\frac{2m}{Y}-1\right)\ud t^2 +\frac{ \ud Y^2}{\frac{2m}{Y}-1}+Y^2(\ud\theta^2+\sinh^2 \theta \ud \phi^2)\,,
\end{gather}}
\noindent where the parameter $m = |\mu|$.
The peculiarities of the hyperbolic solutions with regard to the energy conditions are also found in \Mov{one of the coauthors' }
work 
\cite{Lobo:2009du}. 
\begin{figure}
\includegraphics[width=0.45\textwidth]{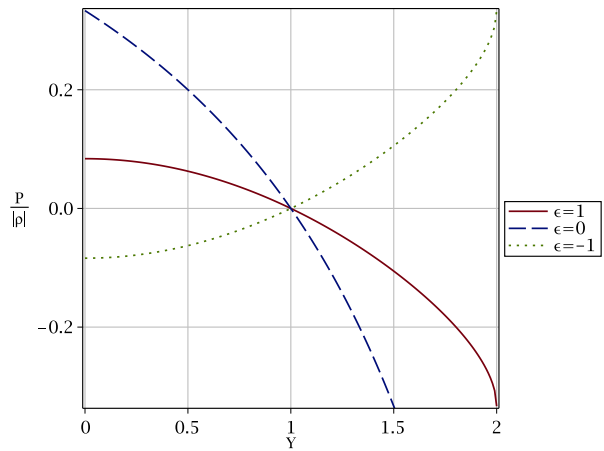}
\caption{Pressure as function of $Y$ for $Y_g =1$ and $|Y_s| = 0.25$ for $\epsilon = 1$ , $\epsilon = 0$  and $\epsilon= -1$.}\label{fig:ptov}
\end{figure}

The equation of state consisting of a negative energy density with a positive pressure might be achieved by some kind of phantom field, but a Lagrangian description of the fluid is beyond the scope of this work. \Amv{ However, 
our simple incompressible model, with constant energy density, but varying pressure, is reminiscent of a constant time surface of a McVittie or Shah-Vaidya spacetime, which admits a Lagrangian scalar field as source \cite{Abdalla:2013ara, C:2019zaw}. This suggests the possibility that there 
exist field models in the literature which can source the solutions presented in this paper.}

\Amv{We notice that the planar solutions may also represent a subclass of cylindrical solutions (see Refs.~\cite{Mena:2015zma, Bronnikov:2019clf}) if we select one coordinate along the plane to be periodic. Thus, our planar solution may be interpreted as a static cylinder of fluid with boundary given by $Y=Y_g$. At 
this surface, 
the solution must be matched with a vacuum solution.}

Actually, all fluid configurations found can be matched with the corresponding 
static solutions presented in Ref. \cite{Maciel:2018tnc} which arise from applying Birkhoff theorem for external fluid sources that satisfy the  hypotheses of the theorem. In those cases there is matter content present outside, the most common examples being an electromagnetic field, and a cosmological constant. Therefore the matching surface will correspond to a surface where $P(Y)$ matches the pressure of the exterior solution, in a manner similar to the way in which  an uncompressible charged sphere is matched to a Reissner-Nordström solution in \cite{Arbanil:2014usa}.}

\section{Conclusion}\label{sec:Conclusion}

In this paper we analysed spacetimes with a two-dimensional maximally symmetric foliation sourced by a perfect fluid. We proved that in those cases, if there is a Killing vector orthogonal to the leaves, its two-expansion vanishes, which allows us to 
simplify our dynamical equations in terms of the two-expansion of a unitary vector orthogonal to the Killing field.

When the Killing vector $\chi^a$ is timelike, we find that the flow lines must be tangent to $\chi^a$, and as this is true at all times, the equations describe a hydrostatic equilibrium{, governed by a (generalised) TOV equation}. When the Killing vector is spacelike, 
we have instead a spatially homogeneous dynamical spacetime
. The result is a subclass of Bianchi universes, with only 
\Amv{one shear degree of freedom}. The corresponding equation gives the evolution of expansion and shear scalars.

We have discussed the geometric meaning of the mass-energy in such spacetimes, and our procedure matches the traditional mass parameter found in those cases by usual methods of integration of Einstein equations. Our 
approach relates the mass parameter to the geometrically defined quasi-local mass-energies 
of Misner-Sharp and Hawking-Hayward \Amv{by slightly changing its definition in order to apply it to our infinite mass-energy distributions.} \Mov{This 
innovation is in itself a step towards addressing the open issue of defining mass/energy in gravitation and cosmology, c.f the recent  works of \cite{Faraoni2016}, and others \cite{Barzegar:2017ijz,Wang:2008jy}  on this subject.}

Using these concepts we could recover the physical interpretation of the geometrical quantities appearing in the equilibrium/evolution equations,  
translate the dual null formalism to the more usual relativistic fluid dynamics framework, and show that the TOV equation arises as a particular case of those equations. Henceforth the generalizations of the TOV equation appear automatically  by just setting $\epsilon = 0$ or $-1$ accordingly.  

\Amv{From this treatment it emerges the fact that} the only static fluid solutions that satisfy the WEC are the spherical ones, as the other two cases require a negative energy density. 

\Amv{In what regards the spatially homogeneous spacetimes, the hydrostatic equilibrium condition also implies a violation of SEC for the non planar solutions, constraining the equation of state for the perfect fluid to be that of a cosmological constant. }


In order to illustrate the analogy between the planar, hyperbolic and spherical cases we studied the static solutions for an incompressible fluid. We found that, besides the known case of pherically symmetric spacetimes, we can {
obtain} a static \Amv{interior} fluid configuration only in the case of planar symmetric spacetimes. \Amv{In the hyperbolic case, the static configuration is an exterior solution that can surround an inner vacuum region. 

Our unified way to describe three classes of spacetimes foliated with codimension-two leaves of constant curvature leads the way to further generalizations, as those spacetimes are of interest in many domains, from braneworld models to AdS/CFT duality. The adaptation of our formalism to  N-dimensional spacetimes is straightforward. 

\Mov{The introduction of the mass parameter to generalise Hawking-Hayward's mass also may have impact on other studies of compact objects in open backgrounds.} 
In addition, we discovered that some of the models require sources that violate energy conditions. Some popular models for modified gravity theories that aim to explain large scale phenomena\Mov{, such} as cosmological inflation and dark energy, also violate energy conditions~\cite{Capozziello:2013vna, Capozziello:2014bqa}, and there are several arguments in the literature suggesting those conditions should be abandoned as a criterion of viability  \cite{Barcelo:2002bv}. 

One of the consequences of matter sources with such equations of state is that solutions with very different and intriguing properties arise, even considering the simplifying assumptions we made in order to obtain analytic results. \Mov{Seeking }
a Lagrangian model in the modified gravity literature that could generate our fluid solutions \Mov{appears as }
an interesting continuation of our findings.
}
\begin{acknowledgements}
 A.M. wish to thank the hospitality of Instituto de Astrofísica e Ciências do Espaço (IA), at the FCUL in Lisbon, where a part of this work was carried out. The work of M.Le~D. has been supported by Lanzhou University starting fund and by the Fundamental Research Funds for the
Central Universities (Grant No.lzujbky-2019-25). The work of JPM was supported by FCT/MCTES through national funds (PIDDAC) by this grant UID/FIS/04434/2019, and by the project PTDC/FIS-OUT/29048/2017. The authors wish to thank Xu Yumeng for helpful discussions.
\end{acknowledgements}
 \bibliography{shortnames,referencias}
\end{document}